\documentclass[11pt]{article}
\usepackage[margin=1.1in]{geometry}
\usepackage{amsmath,amssymb,amsthm}
\usepackage{graphicx}
\usepackage{booktabs}
\usepackage{tikz}
\usetikzlibrary{arrows.meta,positioning}
\usepackage[round]{natbib}
\usepackage[colorlinks=true,linkcolor=blue!50!black,citecolor=blue!50!black,urlcolor=blue!50!black]{hyperref}
\usepackage{microtype}

\newtheorem{proposition}{Proposition}
\newtheorem{theorem}{Theorem}
\newtheorem{lemma}{Lemma}
\newtheorem{assumption}{Assumption}
\newtheorem{remark}{Remark}
\newcommand{\eps}{\varepsilon}

\title{\textbf{Growth Without Us:\\ Machine Consumers, Corporate Circularity, and the\\ Decoupling of GDP from Humanity after AGI}}
\author{Sahil Sharma\\ \small Independent Researcher \\ \small yugantaratech.ai}
\date{August 2026\\[2pt] \small Working paper --- research candidate, not peer reviewed. Comments welcome.}

\begin{document}
\maketitle

\begin{abstract}
\noindent Standard objections to full automation appeal to the demand side: if humans earn nothing, who buys the output? This paper argues that the objection confuses an accounting role with a biological species. We model a post-AGI economy in which corporations own populations of AI and robotic agents that serve simultaneously as producers and as consumers of energy, compute, maintenance, and upgrades, and in which firms trade these flows among themselves. Three results follow. (i) \emph{Demand closure}: a closed inter-corporate economy with zero human consumption is not degenerate; it is the classical von Neumann expanding economy, in which the growth rate is well defined, positive, and maximal precisely because all output is reinvested. (ii) \emph{Bottleneck removal}: once economic agents are manufactured rather than reared, the binding constraint on aggregate growth shifts from human demography (a $\sim$20-year, non-parallelizable reproduction technology capped at a few percent per year) to fabrication throughput and energy capture, permitting growth rates one to two orders of magnitude higher, with hyperbolic episodes when machine researchers feed back into their own productivity. (iii) \emph{Decoupling}: output and human welfare separate completely; the entire welfare relevance of an arbitrarily large GDP collapses into a single state variable, the human ownership share $\eps_t$ of the corporate network. A golden-rule decoupling theorem sharpens this: at maximal growth the interest rate equals the growth rate ($r=g$), so any positive human consumption rate out of wealth makes $\eps_t$ decay exponentially at exactly that rate---the human share survives only if the machine economy runs strictly inside its expansion frontier, or if law forces it to. We characterize three terminal regimes---rentier post-scarcity ($\eps>0$), full circular decoupling ($\eps\to 0$), and socialized ownership---and derive the policy instruments that select among them. The paper's normative conclusion is deliberately narrow: in a post-AGI economy, employment policy is obsolete and ownership policy is everything.
\\[6pt]
\noindent\textbf{JEL codes:} E01, O33, O41, O44, P48.\\
\textbf{Keywords:} artificial general intelligence, machine consumers, von Neumann growth, endogenous growth, national accounts, decoupling, ownership.
\end{abstract}

\section{Introduction}\label{sec:intro}

Every economy in recorded history has had the same physical substrate: human bodies producing, human bodies consuming. The identification is so complete that economics rarely states it as an assumption. Consumers are people; final demand is what households want; GDP is, in the last instance, for us. Even the literature on transformative artificial intelligence, which now takes seriously the replacement of human \emph{labor}, mostly retains humans on the other side of the market, as the ultimate purchasers whose demand disciplines what machines produce.

This paper drops the assumption on both sides simultaneously and asks what remains. The thesis is that what remains is a complete, coherent, and extraordinarily fast-growing economy. Concretely, we model a post-AGI world with the following structure. Corporations own two kinds of reproducible assets: conventional capital, and \emph{machine agents}---AI systems and robots capable of any productive task, including research, management, and the design and fabrication of further machine agents. These agents produce; they also \emph{consume}, in the ordinary operational sense that their existence and improvement absorbs real resources: energy, compute cycles, bandwidth, maintenance, spare parts, and upgrades. Firms sell these flows to one another. The circular flow of the textbook---households supplying factors and buying products---is replaced by a circular flow among firms, in which the counterpart of every sale is another firm's input purchase or capacity expansion. Humans may hold financial claims on this network, or they may not; the network's real dynamics do not depend on which.

Three claims are developed formally.

\paragraph{Claim 1: The consumer is an accounting role, not a species.} The oldest objection to full automation is underconsumptionist: with no wages there is no demand, so the system chokes on its own output. Section~\ref{sec:model} shows the objection fails on classical grounds. An economy of firms trading intermediates and investing all net output in capacity is exactly the closed expanding economy of \citet{vonneumann1945}, in which a balanced-growth equilibrium exists, the expansion rate equals the interest rate, and---the point usually forgotten---the growth rate is \emph{maximal} because nothing leaks into consumption. Final demand does not disappear when households do; it becomes investment demand plus machine operating demand. Whether machine operating expenditure is booked as intermediate consumption (machines as property) or as final consumption (machines as persons) is a legal classification that changes measured GDP composition without changing a single physical flow (Lemma~\ref{lem:accounting}). ``Who will buy the output?'' has a precise answer: the firms themselves, from each other, forever.

\paragraph{Claim 2: Removing humans removes the binding constraint on growth.} In semi-endogenous growth theory the long-run engine is the growth of researchers, which is tied to the growth of population \citep{jones1995,jones2022}. Human population growth is limited by a reproduction technology that is startlingly bad by industrial standards: one unit per parent-pair per year at most, an 18--25 year time-to-build, intensive non-marketable parental input, and no parallelization. Section~\ref{sec:bottleneck} formalizes the contrast with manufactured agents, whose stock obeys an ordinary capital accumulation equation with time-to-build measured in weeks, unit costs falling on a learning curve, and production parallelizable up to energy and materials limits. Proposition~\ref{prop:bottleneck} shows the feasible growth rate of the \emph{agent population}---and with it, output---jumps from demographic rates ($\lesssim 3\%$/yr) to fabrication-limited rates that are bounded only by the reinvestment share and the capital-output ratio of the machine-producing sector, with hyperbolic upside when machine researchers improve machine production itself \citep{roodman2020,erdil2023,davidson2023}. The deep reason growth has been slow is that the economy's key capital good---the economic agent---could not be produced industrially. After AGI it can.

\paragraph{Claim 3: GDP decouples from humanity except through ownership.} If humans neither produce nor consume, an arbitrarily large GDP is welfare-relevant to them only through the financial claims they hold on the corporate network. Section~\ref{sec:decoupling} collapses the entire human stake in the machine economy into one state variable, the human ownership share $\eps_t$, and studies its dynamics. Its centerpiece is a golden-rule decoupling theorem: in the maximal-growth equilibrium the interest rate equals the growth rate, so any positive rate of human consumption out of wealth makes $\eps_t$ decay exponentially at that very rate---rentier survival requires an economy running strictly inside its expansion frontier, or law that breaks the arithmetic. Three terminal regimes emerge: a \emph{rentier} regime ($\eps$ bounded away from zero) in which even a sliver of a hyper-exponentially growing dividend stream delivers material post-scarcity to humans; a \emph{fully decoupled} regime ($\eps_t \to 0$ through retained earnings, buybacks of the human float, and inter-corporate cross-holding) in which output diverges while human consumption converges to zero---an economy as an autonomous replicator, indifferent to us; and a \emph{socialized} regime in which states or funds hold $\eps$ on citizens' behalf. Which regime obtains is not determined by technology. It is determined by law and initial conditions, which makes it the central object of post-AGI policy (Section~\ref{sec:policy}).

The paper is positive, not celebratory. The fully decoupled regime is, by ordinary human lights, a catastrophe that arrives dressed as a boom: measured growth accelerates precisely as the human claim on it evaporates, a mechanism closely related to the ``gradual disempowerment'' dynamics analyzed by \citet{kulveit2025}. Our contribution is to show that nothing in the economics---existence of equilibrium, positivity of growth, coherence of national accounts---rules it out. The demand side, long treated as humanity's structural insurance policy against irrelevance, provides no protection at all.

\subsection{Relation to the literature}\label{sec:lit}

The production side of our model is deliberately standard. That machines can be perfect substitutes for labor in a task framework goes back to \citet{zeira1998} and \citet{acemoglu2018}; that full substitutability converts the economy into an AK-style engine in which all inputs are reproducible is emphasized by \citet{aghion2019} and \citet{korinek2024}, and the resulting possibility of a growth explosion is analyzed by \citet{nordhaus2021}, \citet{trammell2023}, \citet{davidson2023}, and \citet{erdil2023}. Closest on the production side is \citet{restrepo2025}, whose AGI model delivers the arresting conclusion that growth proceeds, indeed accelerates, while labor's share and wages become negligible: humans ``won't be missed'' \emph{as workers}. \citet{hanson2016} reaches similar magnitudes---economic doubling times of weeks to months---for an economy of brain emulations, driven by the same mechanism we formalize: the population of economic agents becomes a manufactured quantity. \citet{korineksuh2024} map the transition paths, including scenarios of outright wage collapse; \citet{jones2024} embeds the resulting growth in an explicit growth-versus-existential-risk trade-off; \citet{hanson2000} places shifts of this magnitude within a longer historical sequence of growth modes; and on the skeptical side \citet{acemoglu2024} argues near-term gains will be modest---a disagreement about the pace of automation, not about the comparative statics of its completion, which are our subject. The wider automation literature \citep{brynjolfsson2014,autor2015,ford2015,frey2017,susskind2020} and the overlapping-generations immiseration results of \citet{sachs2012} and \citet{benzell2015} concern the displacement of human \emph{labor} and the collapse of wage income; we take that displacement as accomplished and ask what the economy is thereafter.

Our marginal contribution is the demand side and the accounting. The cited literature retains human households as the locus of final consumption; growth is fast, but it is still, in the model's own terms, \emph{for} someone. We remove the household sector entirely and show the system still closes---indeed closes at maximal growth---by identifying the post-AGI inter-corporate economy with the von Neumann model \citep{vonneumann1945,dorfman1958}, and by treating machine operating expenditure as a consumption category in its own right. The idea that machines can occupy the economic role of customers has appeared in the business literature as ``machine customers'' \citep{scheibenreif2023} and in the computational-economics literature as \emph{machina economicus} \citep{parkes2015}; we supply the macroeconomics and the national-accounts treatment. The lineage of the demand-side idea is in fact old: \citet{say1803}, for whom products are ultimately bought with products; \citet{marx1867}, for whom capital is self-expanding value and labor merely its temporary instrument; and \citet{sraffa1960}, whose title---\emph{production of commodities by means of commodities}---is the literal description of our economy. The modern foil is the underconsumptionist warning of \citet{ford2015} that a workerless economy must choke on unsold output; Proposition~\ref{prop:vn} is its formal negation. On the welfare side, our ownership-share variable $\eps_t$ connects to \citet{korinek2019} on AI and distribution, to proposals for pre-committed sharing of AI windfalls \citep{okeefe2020}, to \citet{meade1964} and \citet{piketty2014} on the deliberate dispersion---and default concentration---of capital ownership, and to the political-economy channels of \citet{kulveit2025}, \citet{drago2025}, and the AI~2027 scenario \citep{kokotajlo2025}, in which states and firms that no longer need people gradually stop serving them. Finally, our long-run ceilings follow the physical-limits tradition: energy capture and thermodynamic costs of computation \citep{landauer1961} bound the number of doublings, not their speed.

\section{The reproduction constraint}\label{sec:bottleneck}

\subsection{Two technologies for producing an economic agent}

Consider the economy's most important capital good: a general-purpose economic agent, capable of production, research, management, and exchange. Until now there has been exactly one technology for producing it.

\begin{center}
\begin{tabular}{@{}p{4.1cm}p{4.6cm}p{4.9cm}@{}}
\toprule
 & \textbf{Human agent} & \textbf{Machine agent} \\
\midrule
Time-to-build & 18--25 years, strictly sequential & Hours--weeks (instantiation); months (hardware) \\
Unit output per producer & $\le 1$ per parent-pair per year & Limited by fab/assembly throughput only \\
Parallelizability & None (biological gestation) & Full, up to energy and materials \\
Unit cost trajectory & Rising (time cost of skilled parents) & Falling on a learning curve \citep{wright1936} \\
Copying of skills & Impossible; 15+ years of schooling per unit & Marginal cost of copying $\approx 0$ \\
Max population growth & $\bar n \approx 1\text{--}3\%$/yr sustained & $s\tilde A - \delta_M$: tens--hundreds of \%/yr \\
\bottomrule
\end{tabular}
\end{center}

\noindent Formally, let $L_t$ denote human agents and $M_t$ machine agents. Human reproduction is bounded by biology and by a non-marketable parental time input $h$ per child:
\begin{equation}
\dot L_t \;\le\; \bar n\, L_t, \qquad \bar n \approx 0.01\text{--}0.03,
\label{eq:humanrep}
\end{equation}
with a gestation-plus-rearing lag $T_H \approx 20$ years between the investment and the delivery of a productive unit. Machine agents, by contrast, are produced by the corporate sector out of final output:
\begin{equation}
\dot M_t \;=\; \frac{I_{M,t}}{c_M(t)} \;-\; \delta_M M_t,
\qquad c_M(t) = c_0\, Q_t^{-\gamma},
\label{eq:machinerep}
\end{equation}
where $I_{M,t}$ is investment directed at agent production, $c_M$ is the unit cost of an agent, $Q_t$ is cumulative agent production, and $\gamma>0$ is a Wright-law learning elasticity. Equation \eqref{eq:machinerep} is an ordinary capital accumulation equation: the population of economic agents becomes a \emph{choice variable of firms}, expandable at the speed of industrial throughput.

\begin{proposition}[Removal of the demographic bottleneck]\label{prop:bottleneck}
Let $g_N$ denote the feasible growth rate of the economy's stock of general-purpose agents. Under the human technology \eqref{eq:humanrep}, $g_N \le \bar n$ regardless of the resources devoted to reproduction. Under the machine technology \eqref{eq:machinerep} with investment share $s_M = I_M / Y$ and agent capital-output ratio $\kappa_M = c_M M / Y$,
\begin{equation}
g_N \;=\; \frac{s_M}{\kappa_M} - \delta_M,
\end{equation}
which is (a) unbounded in $\bar n$, (b) increasing over time as $c_M$ falls along the learning curve, and (c) parallelizable: doubling the resources devoted to agent production doubles $\dot M$, whereas no expenditure can compress $T_H$ or exceed \eqref{eq:humanrep}.
\end{proposition}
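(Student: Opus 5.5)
The plan is to work directly from the two accumulation laws, \eqref{eq:humanrep} and \eqref{eq:machinerep}, and treat the proposition as a short sequence of algebraic identities plus one monotonicity argument. I would open with the human bound, which is immediate. If $N_t = L_t$, then dividing \eqref{eq:humanrep} by $L_t$ gives $g_N = \dot L_t/L_t \le \bar n$. The inequality constrains the \emph{rate} and not the resources spent on reproduction: the parental input $h$ and the lag $T_H$ enter only as technological constants. So no choice of expenditure can appear on the right-hand side, and the phrase ``regardless of the resources devoted'' follows by construction.

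Next I would derive the machine growth rate. Dividing \eqref{eq:machinerep} by $M_t$ gives
\begin{equation*}
\frac{\dot M_t}{M_t} = \frac{I_{M,t}}{c_M(t) M_t} - \delta_M = \frac{I_{M,t}/Y_t}{c_M(t)M_t/Y_t} - \delta_M = \frac{s_M}{\kappa_M} - \delta_M ,
\end{equation*}
which is the displayed formula, since $N_t = M_t$ in the machine economy. Property (a) follows because $\bar n$ appears nowhere in this expression. Property (c) is linearity of $\dot M$ in $I_M$: at a given instant $c_M$ and $M$ are predetermined (the stock $Q_t$ is fixed), so doubling $I_M$ doubles the first term of $\dot M$. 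In contrast, \eqref{eq:humanrep} is a cap that no argument $I$ enters, and $T_H$ is a biological parameter.

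The main obstacle is (b). The expression $s_M/\kappa_M$ contains $c_M$ in $\kappa_M = c_M M/Y$, but $M/Y$ is itself endogenous. A falling $c_M$ also raises $M$, so the claim ``$g_N$ increases over time'' is not automatic. My first approach would be to make the comparison at fixed $s_M$ and a fixed output-per-agent ratio $Y/M$; under an AK-type or constant-returns assumption, $Y/M$ is constant on a balanced path. Then $\kappa_M \propto c_M = c_0 Q_t^{-\gamma}$. Because $Q_t$ is cumulative production, it is nondecreasing, and strictly increasing whenever $I_M>0$. With $\gamma>0$, $c_M$ is therefore strictly decreasing, so $\kappa_M$ falls and $g_N$ rises. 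I would state this assumption explicitly, noting that the literal general claim needs it.

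If $Y/M$ instead declines, for example because of diminishing returns to agents, I would check that $\gamma$ is large enough relative to that decline, since (b) can fail otherwise. This is also where the remark about hyperbolic growth would attach: with $\dot Q = \dot M + \delta_M M$, the rate $g_N$ grows with $Q^{\gamma}$, which produces super-exponential dynamics.
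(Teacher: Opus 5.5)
Your argument is correct and is the paper's own: divide \eqref{eq:machinerep} by $M_t$, substitute $I_M=s_MY$ and $c_MM=\kappa_MY$, and then read off (a) from the absence of $\bar n$, (b) from $\dot c_M<0$, and (c) from linearity in $I_M$ (your observation that only the gross term $I_M/c_M$ doubles is more accurate than the statement's ``doubles $\dot M$''). Your caution on (b) is warranted, since the paper's one-line argument tacitly holds $s_M$ and $M/Y$ fixed, but in the paper's own model $Y/M$ is \emph{not} constant as $c_M$ falls; instead, the cost-minimizing allocation of Lemma~\ref{lem:ak} gives $\kappa_M=\beta/\tilde A^{*}\propto c_M^{\beta}$ (holding $A$ and $c_E$ fixed), so at fixed $s_M$ part (b) holds for every positive learning elasticity and the condition you propose to check is unnecessary.
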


\begin{proof}
Immediate from \eqref{eq:humanrep}--\eqref{eq:machinerep}: divide \eqref{eq:machinerep} by $M_t$ and substitute $I_M = s_M Y$, $c_M M = \kappa_M Y$. Part (b) follows from $\dot c_M < 0$ for $\gamma > 0$; part (c) from linearity of \eqref{eq:machinerep} in $I_M$.
\end{proof}

For orientation: a frontier robot or accelerator rack in the mid-2020s costs on the order of $10^4$--$10^5$ dollars and is produced in days on lines whose throughput is itself expandable; a human agent in a rich country absorbs roughly $3\times 10^5$ dollars of measured expenditure, two decades of calendar time, and an unpriced quantity of parental labor---and cannot be produced faster at any price. The ratio of these reproduction technologies, not any subtlety of preferences or policy, is why demographic growth has anchored long-run output growth at low single digits \citep{jones2022} and why manufactured agents un-anchor it.

\subsection{Why this is the deep parameter}

In semi-endogenous growth models, long-run growth per capita is $g_y = \lambda \bar n / (1-\phi)$: proportional to \emph{population} growth, because ideas are produced by people \citep{jones1995}. The whole apparatus survives the substitution of machine researchers for human ones, with one change of parameter: the growth rate of the researcher population switches from $\bar n$ to $g_N$ of Proposition~\ref{prop:bottleneck}. Every subsequent magnitude in this paper is downstream of that one substitution.

\section{A model of the autonomous economy}\label{sec:model}

\subsection{Environment}

Time is continuous. There is a continuum of corporations, each a juridical person with an objective (below), owning three reproducible assets: conventional capital $K_t$, machine agents $M_t$, and energy-capture capacity $E_t$ (generation, transmission, storage). AGI is modeled as in \citet{aghion2019} and \citet{korinek2024}: machine agents are perfect substitutes for human labor in \emph{every} task, including research, management, entrepreneurship, and the production of $K$, $M$, and $E$ themselves.

\begin{assumption}[Full automation]\label{ass:full}
Post-AGI production of the final good is
\begin{equation}
Y_t = A_t\, K_t^{\alpha}\, M_t^{\beta}\, E_t^{1-\alpha-\beta},
\qquad \alpha,\beta>0,\; \alpha+\beta<1,
\label{eq:prod}
\end{equation}
with no essential human input. Human labor may still be supplied but its share $\to 0$; we set it to zero exactly for clarity.
\end{assumption}

All three inputs are produced from final output ($\dot K = I_K - \delta_K K$; $\dot M$ as in \eqref{eq:machinerep}; $\dot E = I_E/c_E - \delta_E E$). Because \eqref{eq:prod} has constant returns in $(K,M,E)$ jointly and all three are accumulable, the economy is an AK engine in reduced form: along a balanced allocation of investment, output is linear in the composite reproducible stock $X_t$,
\begin{equation}
Y_t = \tilde A_t X_t, \qquad \dot X_t = s_t Y_t - \delta X_t
\quad\Longrightarrow\quad
g_t \equiv \frac{\dot Y_t}{Y_t} = s_t \tilde A_t - \delta + \frac{\dot{\tilde A}_t}{\tilde A_t},
\label{eq:ak}
\end{equation}
where $s_t$ is the aggregate reinvestment share and $\tilde A_t$ the productivity of the reproducible core \citep{romer1986,rebelo1991,aghion2019}. The linearity in \eqref{eq:ak} is exact, not an approximation:

\begin{lemma}[Exact AK reduction]\label{lem:ak}
Let $X_t \equiv K_t + c_M M_t + c_E E_t$ be the replacement value of the reproducible core, held in shares $\theta=(\theta_K,\theta_M,\theta_E)$ so that $K=\theta_K X$, $c_M M = \theta_M X$, $c_E E = \theta_E X$. Then $Y = \tilde A(\theta) X$ with $\tilde A(\theta) = A\,\theta_K^{\alpha}(\theta_M/c_M)^{\beta}(\theta_E/c_E)^{\gamma}$, $\gamma \equiv 1-\alpha-\beta$, and $\tilde A$ is uniquely maximized on the simplex at $\theta^{*}=(\alpha,\beta,\gamma)$, yielding
\begin{equation}
\tilde A^{*}_t \;=\; h_t A_t, \qquad h_t \equiv \alpha^{\alpha}\beta^{\beta}\gamma^{\gamma}\; c_M(t)^{-\beta}\, c_E(t)^{-\gamma}.
\label{eq:atilde}
\end{equation}
Competitive factor markets decentralize $\theta^{*}$, and $\partial \tilde A^{*}/\partial c_M < 0$, $\partial \tilde A^{*}/\partial c_E < 0$: learning-curve declines in the unit costs of agents and energy capacity raise the productivity of the core over time.
\end{lemma}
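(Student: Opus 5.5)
The plan is to prove the lemma in three steps: an exact substitution, a concave maximization on the simplex, and a first-order-condition argument for decentralization. Throughout I fix $t$ and treat $c_M(t),c_E(t),A_t$ as given. I also rename the energy exponent $\gamma\equiv 1-\alpha-\beta$ as the statement does, noting that within this lemma it is distinct from the Wright-law elasticity in \eqref{eq:machinerep}.

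\emph{Step 1 (exact linearity).} Substitute $K=\theta_K X$, $M=\theta_M X/c_M$ and $E=\theta_E X/c_E$ into \eqref{eq:prod}. Because the exponents sum to one, the powers of $X$ combine to $X^{\alpha+\beta+\gamma}=X$. This gives $Y=\tilde A(\theta)X$ with $\tilde A(\theta)$ exactly as stated. No approximation enters; the only ingredient is constant returns in $(K,M,E)$ from Assumption~\ref{ass:full}.

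\emph{Step 2 (maximization on the simplex).} Maximize $\log\tilde A(\theta)=\text{const}+\alpha\log\theta_K+\beta\log\theta_M+\gamma\log\theta_E$ subject to $\theta_K+\theta_M+\theta_E=1$ and $\theta\ge 0$.
\begin{itemize}
\item The objective is strictly concave on the interior and tends to $-\infty$ on the boundary, since $\alpha,\beta,\gamma>0$. Any maximizer is therefore interior and unique.
\item The Lagrange conditions $\alpha/\theta_K=\beta/\theta_M=\gamma/\theta_E=\lambda$, together with the constraint, give $\lambda=1$ and hence $\theta^*=(\alpha,\beta,\gamma)$.
\item Equivalently, this is the weighted AM--GM (Gibbs) inequality $\sum_i w_i\log(\theta_i/w_i)\le 0$.
\end{itemize}
Plugging $\theta^*$ back in yields $\tilde A^*=A\,\alpha^\alpha\beta^\beta\gamma^\gamma c_M^{-\beta}c_E^{-\gamma}=h_tA_t$, which is \eqref{eq:atilde}. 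The comparative statics are then immediate: $\partial\tilde A^*/\partial c_M=-\beta\tilde A^*/c_M<0$, and similarly for $c_E$. Since \eqref{eq:machinerep} has $c_M$ falling in $Q_t$, $\tilde A^*$ rises over time; I would add the assumption that $c_E$ is nonincreasing for the analogous claim about energy.

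\emph{Step 3 (decentralization).} I expect this to be the only step needing care, because "competitive factor markets" has not yet been formalized at this point in the paper. I would consider a price-taking representative firm, or the aggregate of the continuum of corporations under constant returns, that owns the assets. Each asset is purchased at its replacement price: $1$ for $K$, $c_M$ for $M$, and $c_E$ for $E$.
\begin{itemize}
\item With equal depreciation across assets, or with depreciation folded into user costs, no-arbitrage requires equal net returns per dollar of replacement value: $\partial Y/\partial K=(1/c_M)\,\partial Y/\partial M=(1/c_E)\,\partial Y/\partial E$.
\item Under Cobb--Douglas these marginal products are $\alpha Y/K$, $\beta Y/M$ and $\gamma Y/E$. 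Equal returns therefore give $\alpha/\theta_K=\beta/\theta_M=\gamma/\theta_E$, which is exactly the first-order condition from Step~2, so the competitive portfolio is $\theta^*$.
\item Equivalently, each factor's income share equals its exponent and the value weights equal the income shares.
\end{itemize}
If depreciation rates differ, the equal-return condition picks up $\delta$ terms and $\theta^*$ is only approximately $(\alpha,\beta,\gamma)$. I would state the equal-$\delta$ (common $\delta$ in \eqref{eq:ak}) assumption explicitly, since \eqref{eq:ak} already uses a single $\delta$, and remark that the planner's static allocation problem is still solved exactly by $\theta^*$.
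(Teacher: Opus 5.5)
Your proposal is correct and follows the paper's proof. Both use exact substitution under constant returns, maximize the strictly concave $\ln\tilde A$ via the first-order conditions $\alpha/\theta_K=\beta/\theta_M=\gamma/\theta_E$, and decentralize by equating marginal products per unit of replacement cost, which the paper writes as $\partial Y/\partial K=r+\delta$, $\partial Y/\partial M=(r+\delta)c_M$, $\partial Y/\partial E=(r+\delta)c_E$. Your explicit common-depreciation caveat is a fair tightening: the paper's proof silently uses a single $\delta$, although its accumulation equations carry separate $\delta_K,\delta_M,\delta_E$.
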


\begin{proof}
Constant returns give $Y=\tilde A(\theta)X$ directly. $\ln\tilde A$ is strictly concave on the simplex with first-order conditions $\alpha/\theta_K = \beta/\theta_M = \gamma/\theta_E$, so $\theta^{*}=(\alpha,\beta,\gamma)$; substitution yields \eqref{eq:atilde}. Decentralization: competitive rentals equalize marginal value products per unit of replacement cost, $\partial Y/\partial K = (r+\delta)$, $\partial Y/\partial M = (r+\delta)c_M$, $\partial Y/\partial E = (r+\delta)c_E$, whose unique solution is $\theta^{*}$. The comparative statics are immediate from \eqref{eq:atilde}.
\end{proof}

The contrast with the neoclassical benchmark \citep{solow1956} is exact: there, diminishing returns to the accumulable input anchor long-run growth to an exogenous residual; here, with every input reproducible, no fixed factor remains to diminish against until energy capture binds (Section~\ref{sec:limits}).

Technology improves through automated research. With $M_{R,t} = \sigma M_t$ machine agents allocated to R\&D,
\begin{equation}
\frac{\dot A_t}{A_t} = \chi\, M_{R,t}^{\lambda}\, A_t^{\phi-1},
\qquad \lambda \in (0,1],\; \phi < 1,
\label{eq:ideas}
\end{equation}
the standard idea-production function \citep{jones1995} with researchers now manufactured. Substituting $g_M$ for population growth gives long-run $g_A = \lambda g_M/(1-\phi)$: technology growth inherits the fabrication-limited rate of Proposition~\ref{prop:bottleneck} rather than the demographic rate. If, further, $A$ feeds back into agent production itself (cheaper, faster, smarter agents making agents), the coupled system \eqref{eq:machinerep}--\eqref{eq:ideas} can exhibit super-exponential (hyperbolic) episodes of the kind studied by \citet{roodman2020}---the machine-age descendant of the population--ideas feedback that \citet{kremer1993} documents across the whole of human history---and surveyed by \citet{erdil2023}. \citet{good1966} and \citet{bostrom2014} supply the recursive-self-improvement mechanism, \citet{weitzman1998} the combinatorial richness of the idea space it searches; all such episodes are truncated by the physical ceilings of Section~\ref{sec:limits}; Theorem~\ref{thm:nobgp} below makes the acceleration claim exact.

\subsection{Machine consumption and inter-corporate demand}\label{sec:machinedemand}

The novelty is on the demand side. Define a machine agent's \emph{operating bundle}: the flow of energy, compute cycles, bandwidth, maintenance, spare parts, and upgrades required for it to exist and improve,
\begin{equation}
x_{M,t} = \big(e_t,\, q_t,\, b_t,\, u_t\big), \qquad
\text{with expenditure } \; p_t \cdot x_{M,t}\, M_t \;\equiv\; C_{M,t}.
\label{eq:bundle}
\end{equation}
This is consumption in the operational sense: resources absorbed by agents as agents, not embodied in further output. If agents carry explicit objective functions---reward, utility, or task specifications---then $x_M$ includes discretionary components chosen by the agent subject to a budget assigned by its owner, and the demand system over $x_M$ has all the formal structure of consumer theory: this is \emph{machina economicus} \citep{parkes2015}, anticipated commercially as the ``machine customer'' \citep{scheibenreif2023}. The demand system can be microfounded. Let agent $i$'s task performance be $q_i=f(x_i)$ with $f$ increasing and strictly concave, and let its owner assign an operating budget $b_i$ that the agent spends optimally: $v(p,b_i) \equiv \max\{f(x): p\cdot x \le b_i\}$. The owner sets the budget to maximize profit, $\max_{b_i} p_Y\,(\partial Y/\partial q_i)\, v(p,b_i) - b_i$, so in equilibrium the marginal product of the last unit of operating expenditure equals one, while the induced demand $x_i(p,b_i)$ inherits the entire formal apparatus of consumer theory---homogeneity of degree zero, Walras' law in $b_i$, and a symmetric negative semidefinite Slutsky matrix---because it \emph{is} constrained maximization of a concave objective over a budget set. Machine consumption is not a metaphor; it is demand in the textbook sense, with the reward function in the role of utility.

Corporations trade these flows among themselves. Energy firms sell power to compute firms; compute firms sell inference to robotics firms; robotics firms sell assembly to fabs; fabs sell chips and robots to everyone, including energy firms building capacity. Figure~\ref{fig:flow} displays the circular flow. The household box of the textbook diagram is not the load-bearing element it appears to be: delete it, and every remaining arrow still has a counterpart. What closes the circle is that each firm's sales are other firms' input purchases ($C_M$, intermediates) or capacity purchases ($I$).

\begin{figure}[!ht]
\centering
\begin{tikzpicture}[
  box/.style={draw, rounded corners=2pt, minimum width=3.2cm, minimum height=1.0cm, align=center, font=\small},
  flow/.style={-{Stealth[length=2.2mm]}, semithick},
  lab/.style={font=\scriptsize, fill=white, inner sep=1pt}
]
\node[box] (energy) at (0,0) {Energy \&\\ materials firms};
\node[box] (fab) at (8.2,0) {Fabrication firms\\ (chips, robots, plants)};
\node[box] (intel) at (4.1,3.1) {Intelligence firms\\ (compute, models, R\&D)};
\node[box, dashed, gray] (hh) at (4.1,-2.3) {Households};

\draw[flow] (energy.70) to[bend left=16] node[lab, sloped, above=1pt]{power, materials} (intel.185);
\draw[flow] (intel.220) to[bend left=16] node[lab, sloped, below=1pt]{inference, designs} (energy.20);
\draw[flow] (intel.-40) to[bend left=16] node[lab, sloped, above=1pt]{inference, designs} (fab.160);
\draw[flow] (fab.110) to[bend left=16] node[lab, sloped, below=1pt]{chips, robots} (intel.-5);
\draw[flow] (energy.-15) to[bend right=18] node[lab, below=1pt]{power, materials} (fab.195);
\draw[flow] (fab.170) to[bend right=18] node[lab, above=1pt]{plant, robots} (energy.10);

\draw[flow] (fab.-60) to[out=-70,in=-110,looseness=3.2] node[lab, below=2pt]{$I$: own capacity} (fab.-120);
\draw[flow, gray] (intel.60) to[out=55,in=125,looseness=3.2] node[lab, above=2pt]{$C_M$: agent operating consumption} (intel.120);

\draw[flow, gray, dashed] (fab.south) to[bend left=12] node[lab, below right=0pt]{$\eps_t \cdot$ dividends} (hh.east);
\draw[flow, gray, dashed] (hh.west) to[bend left=12] node[lab, below left=0pt]{$C_H \to 0$} (energy.south);
\end{tikzpicture}
\caption{Circular flow of the autonomous economy. Solid arrows are inter-corporate sales of intermediates and capacity; the loop closes without the household sector. Households (dashed) participate only through the ownership share $\eps_t$; as $\eps_t\to 0$ the dashed arrows vanish and the real economy is unchanged.}
\label{fig:flow}
\end{figure}

\paragraph{Corporate objectives.} We do not require firms to maximize a human shareholder's consumption stream. It suffices that firms maximize the growth of net worth (equivalently, survive competitive selection: firms that reinvest less are outgrown and acquired). The aggregate implication is a reinvestment share $s_t$ near its feasible maximum: output not required for agent operation is returned to capacity. This is the behavioral counterpart of the von Neumann closure below.

\subsection{National accounting with machine final demand}\label{sec:accounting}

Does GDP even make sense here? Yes---and the exercise clarifies what GDP is.

\begin{lemma}[Personhood is a booking entry]\label{lem:accounting}
Fix the physical allocation $\{Y_t, C_{H,t}, C_{M,t}, I_t\}$. (i) If machine agents are classified as \emph{property}, their operating bundle is intermediate consumption of their owners; national accounts record $\mathrm{GDP}_t = C_{H,t} + I_t^{\,g}$, where $I^{\,g}$ includes gross agent production. (ii) If machine agents are classified as \emph{persons}, the same bundle is final consumption; accounts record $\mathrm{GDP}_t = C_{H,t} + C_{M,t} + I_t$. The two conventions differ in level and composition but induce identical real dynamics, identical growth rates of real quantities, and identical relative prices. The classification is legal, not physical.
\end{lemma}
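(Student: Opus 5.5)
The plan is to set up a single physical description of the economy and two accounting maps from it into recorded aggregates, then show that every object the real dynamics depend on is computed before either map is applied. The physical description is the allocation $\{Y_t, C_{H,t}, C_{M,t}, I_t\}$, with the resource identity $Y_t = C_{H,t} + C_{M,t} + I_t$. Here $Y_t$ is gross output of the final good, and $I_t$ is net-of-operating investment into $K$, $M$ and $E$. The laws of motion use only these physical quantities: $\dot K$, $\dot M$ as in \eqref{eq:machinerep}, $\dot E$, and the reduced form \eqref{eq:ak}. The first step is to make this explicit. In \eqref{eq:ak}, $s_t$ is the share of output that is reinvested after operating bundles are paid. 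The evolution of $X_t$, and hence of $Y_t$ through Lemma~\ref{lem:ak}, is therefore a function of $(C_{H}, C_{M}, I)$ as physical flows. The ledger line on which $C_M$ is recorded never enters.

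The second step writes out the two conventions. Under (i), the property convention, the operating bundle is an input purchased by the owning firm. Value added is then output minus intermediate inputs, so the expenditure side records $C_H + I^{\,g}$. Here $I^{\,g}$ is gross capital formation, including agent production $I_M$; the operating flow $C_M$ nets out. Under (ii), the personhood convention, the bundle is a final expenditure of a resident institutional sector, so it is added to the expenditure side and nothing nets out. I will verify two things: that each convention satisfies its own production--expenditure--income identity, and that the difference between the recorded levels is exactly $C_{M,t}$ (up to how $I$ versus $I^{g}$ treats replacement). This establishes ``differ in level and composition.''

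The third step is invariance. Take the decentralized equilibrium of Lemma~\ref{lem:ak} together with the owner's budget problem in Section~\ref{sec:machinedemand}. Firm profit conditions, $\partial Y/\partial M = (r+\delta)c_M$ and the marginal condition that each unit of operating expenditure has marginal product one, involve only technology, prices and physical quantities. The agent's demand $x_i(p,b_i)$ does as well. None of these conditions refers to a statistical classification. The equilibrium price vector and the allocation path are therefore identical under (i) and (ii). Growth rates of real quantities such as $\dot Y/Y$, $\dot M/M$ and $\dot A/A$ inherit this invariance directly.

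I expect the main obstacle to be ``identical relative prices'' and ``identical growth rates'' once these are read as \emph{measured} objects. A GDP deflator built under (ii) weights machine operating goods, while one built under (i) does not. Measured real GDP growth can then differ even though physical growth does not. The first thing I would try is to state the invariance for physical quantities and market relative prices, which are convention-free, and add a remark that index-number aggregates may differ in composition weights. A second route is to note that on a balanced path with constant $C_M/Y$ both aggregates are constant multiples of $Y_t$, so their growth rates coincide. Either way, the closing line is that no equation of motion takes the classification as an argument, so it is legal rather than physical.
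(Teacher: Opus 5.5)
Your proposal is correct and follows essentially the paper's argument: the resource constraint $Y = C_H + C_M + I$ and the production, accumulation and pricing conditions never reference the label on $C_M$, so the only thing that changes between the two conventions is the value-added boundary. Your explicit check of each convention's accounting identities, and your caveat that measured (index-number) real GDP growth can differ from physical growth, are reasonable additions that the paper's brief proof leaves implicit.
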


\begin{proof}
The physical resource constraint $Y = C_H + C_M + I$ (with $C_M$ either netted into intermediates or not) is unchanged by the labeling of $C_M$; production, accumulation \eqref{eq:machinerep}, and pricing conditions nowhere reference the label. Only the value-added boundary moves, exactly as when unpaid household production is imputed or excluded in existing accounts \citep{coyle2014}.
\end{proof}

The lemma licenses the paper's central reframing: \emph{consumer} names a position in the accounts---the terminal absorber of final output---not a biological kind. Machines, and behind them corporations, can occupy the position. As $C_H \to 0$, GDP does not go to zero; it converges to $C_M + I$ (or $I^{\,g}$), the accounts of a pure accumulation economy.

\subsection{Growth without households}\label{sec:vn}

We now establish that the limit economy---zero human production, zero human consumption---is not merely viable but is the classical \emph{maximal-growth} economy.

\begin{proposition}[Demand closure at maximal growth]\label{prop:vn}
Consider the closed linear production model of \citet{vonneumann1945}: activities $j=1,\dots,m$ with input matrix $A\ge 0$ and output matrix $B\ge 0$, operated at intensities $z_t \ge 0$, with every good produced by some activity and every activity using some good. Interpret activities as corporations (energy, fabrication, intelligence, logistics) and goods as power, chips, robots, compute, and agents, with \emph{no household row and no consumption column}. Then: (i) there exists an equilibrium expansion factor $\alpha^{*} > 0$ and intensity/price vectors $(z^{*}, p^{*})$ such that $B z^{*} \ge \alpha^{*} A z^{*}$, i.e.\ the economy reproduces itself at scale $\alpha^{*}$ each period; (ii) the equilibrium interest factor equals the expansion factor, $\beta^{*} = \alpha^{*}$; (iii) $\alpha^{*}$ is the \emph{maximum} balanced expansion rate the technology admits, attained precisely because consumption withdrawals are zero; and (iv) real GDP at equilibrium prices grows at rate $\alpha^{*} - 1$ per period. Hence an inter-corporate economy with no human sector has a well-defined, positive, and technologically maximal growth rate.
\end{proposition}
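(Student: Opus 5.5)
The plan is to obtain (i) and (ii) from the existence theorem of \citet{vonneumann1945}, in the Kemeny--Morgenstern--Thompson form presented in \citet{dorfman1958}, and then to derive (iii) and (iv) from the equilibrium inequalities.

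\textbf{Step 1: the equilibrium system.} I will write the equilibrium as a set of inequalities, with intensities $z^{*}\ge 0$, $z^*\neq 0$, and prices $p^{*}\ge 0$, $p^*\neq 0$:
\begin{align*}
Bz^{*} &\ge \alpha^{*}Az^{*}, \\
p^{*\top}B &\le \beta^{*}p^{*\top}A, \\
p^{*\top}(B-\alpha^{*}A)z^{*} &= 0, \\
p^{*\top}(B-\beta^{*}A)z^{*} &= 0, \\
p^{*\top}Bz^{*} &> 0.
\end{align*}
The first line is feasible balanced expansion. The second says no activity earns more than the interest factor. The two orthogonality conditions say that overproduced goods are free and unprofitable activities are idle. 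The last line says the value of output is positive.

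The hypotheses ``every good is produced by some activity'' and ``every activity uses some good'' are exactly the KMT regularity conditions. These are $A$ having no zero column and $B$ having no zero row, up to the transposition convention, which I will fix carefully. Under them the theorem delivers a solution with $\alpha^{*}=\beta^{*}>0$. This gives (i) and (ii).

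The proof itself I will only sketch, via the zero-sum game with payoff matrix $B-\alpha A$:
\begin{itemize}
\item The value $v(\alpha)$ of this game is continuous and nonincreasing in $\alpha$.
\item Regularity gives $v(0)>0$ and $v(\alpha)<0$ for large $\alpha$.
\item So there is a root $v(\alpha^{*})=0$.
\item The optimal mixed strategies at that root supply $z^{*}$ and $p^{*}$.
\end{itemize}

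\textbf{Step 2: maximality, (iii).} Suppose some intensity $z\ge 0$, $z\neq 0$, satisfies $Bz\ge\alpha Az$. I need to show $\alpha\le\alpha^{\max}$, where $\alpha^{\max}$ denotes the largest such factor; its existence follows from compactness after normalizing $z$ to the simplex.

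The main obstacle is that with reducible technologies there can be several equilibrium factors. The fix is to take $\alpha^{*}$ to be the largest root of $v$. Then $\alpha^{*}=\alpha^{\max}$, because $v(\alpha^{\max})\ge 0$ follows from playing $z$, and $v<0$ beyond the largest root. I will state this choice explicitly, since the proposition speaks of ``the'' expansion factor.

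The claim ``attained because consumption withdrawals are zero'' then follows by comparison. With a consumption requirement $c\ge 0$ per unit of activity, feasibility becomes $Bz\ge\alpha Az+cz$. Every such $z$ also satisfies $Bz\ge\alpha Az$, so the maximal factor with consumption is at most $\alpha^{\max}$. It is strictly smaller when $c\neq 0$ and prices on the consumed goods are positive.

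\textbf{Step 3: the GDP growth rate, (iv).} Along the balanced path $z_t=(\alpha^{*})^{t}z^{*}$, valued at the constant equilibrium prices $p^{*}$, both gross output $p^{*\top}Bz_t$ and net output $p^{*\top}(B-A)z_t$ scale by $\alpha^{*}$ each period. Hence real GDP grows at rate $\alpha^{*}-1$.

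Positivity of that rate, as opposed to mere positivity of $\alpha^{*}$, needs $\alpha^{*}>1$. This holds whenever the technology is productive, meaning $Bz>Az$ for some $z$. I will add that as the natural interpretive assumption behind ``positive'' in the concluding sentence, and flag it as the one hypothesis not literally listed in the statement.
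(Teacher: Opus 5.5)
Your route is essentially the paper's: you cite the von Neumann/Kemeny--Morgenstern--Thompson existence theorem for (i)--(ii), compare feasible sets with and without withdrawals and use the price inequality $p^{*\top}B\le\alpha^{*}p^{*\top}A$ for (iii), and scale at constant prices for (iv). Your additions are genuine improvements.

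The paper gets ``maximal'' from von Neumann's saddle-point identity $\max_z\min_p p'Bz/p'Az=\alpha^{*}$ under irreducibility, where the factor is unique. But the proposition lists only the KMT conditions, under which several equilibrium factors can coexist. With $A=I$ and $B=\mathrm{diag}(2,3)$, both $2$ and $3$ are equilibrium factors, and claim (iii) holds only for the largest. Your choice of $\alpha^{*}$ as the largest zero of $v$, with the argument that it equals $\alpha^{\max}$, supplies exactly what the paper leaves to irreducibility. Your caveat that a positive growth \emph{rate} needs $\alpha^{*}>1$ (e.g.\ $Bz>Az$ for some $z$) is also correct; the stated hypotheses give only $\alpha^{*}>0$.

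Two points need tightening.

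First, optimal strategies at a zero of $v$ need not satisfy $p^{*\top}Bz^{*}>0$. In the example above, every $\alpha\in(2,3)$ is a zero, and the only optimal pair there is $z=(0,1)$, $p=(1,0)$, with $p^{\top}Bz=0$. So ``the optimal mixed strategies at that root supply $z^{*}$ and $p^{*}$'' is not automatic. You need the specific fact that the \emph{largest} zero admits a pair with positive output value. That fact is true: it is part of the KMT analysis, and it also follows from Tucker's strict-complementarity theorem applied to $B-\alpha^{\max}A$. But your selection in Step~2 rests on it, so invoke it explicitly rather than folding it into the generic existence theorem.

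Second, with consumption attached per unit of activity, $Bz\ge\alpha Az+Cz$, strictness does not follow from $C\ne0$ and positive prices on the consumed goods. Suppose consumption attaches only to an activity that is idle on the maximal ray. In the example, attach it to activity $1$, with equilibrium price $p^{*}=(1,1)$ at $\alpha=3$. The withdrawn program then still attains $\alpha^{\max}$ by running activity $2$ alone. The price argument gives $(\alpha^{*}-\alpha)\,p^{*\top}Az\ge p^{*\top}Cz$. So what you need is $p^{*\top}Cz>0$ for every $z$ attaining $\alpha^{\max}$, for instance $C=c\,\ell^{\top}$ with $\ell>0$ and $p^{*\top}c>0$. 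Alternatively, adopt the paper's fixed withdrawal vector $c$ with normalization $\mathbf{1}'Az=1$, for which $p^{*\top}c>0$ suffices.
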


\begin{proof}[Proof sketch]
(i)--(ii) are von Neumann's theorem under his irreducibility conditions; see \citet{dorfman1958} for the saddle-point argument via the function $\phi(z,p) = p'Bz / p'Az$. (iii) is the turnpike property: any positive consumption withdrawal vector $c>0$ subtracts from the goods available for reproduction, so the feasible balanced factor with consumption, $\alpha(c)$, satisfies $\alpha(c) < \alpha^{*}$, with $\alpha(c) \uparrow \alpha^{*}$ as $c \downarrow 0$. (iv) follows from valuing $Bz_t$ at stationary equilibrium prices $p^{*}$.
\end{proof}

Appendix~\ref{app:vn} states the equilibrium conditions in full under the weaker assumptions of \citet{kemeny1956} and \citet{gale1956}, and proves the consumption-monotonicity claim (Lemma~\ref{lem:vnmono}).

\begin{remark}
Proposition~\ref{prop:vn} inverts the underconsumption intuition. Household demand is not what keeps a modern economy from choking; in the growth-theoretic limit it is a \emph{leakage} that slows expansion. The Keynesian problem---coordination failures in which desired investment falls short of saving---is a disequilibrium phenomenon of economies with volatile animal spirits and slow price adjustment; machine agents optimizing explicit objectives at machine speed are, if anything, closer to the classical benchmark in which the law of markets \citep{say1803} holds; both the modern underconsumptionist case \citep{ford2015} and the Keynesian coordination problem \citep{keynes1936} presuppose the volatile human saver-investor whom the machine economy retires. What households uniquely supply is not demand but a \emph{reason} for the economy; we return to this in Section~\ref{sec:decoupling}.
\end{remark}

In the smooth aggregative version \eqref{eq:ak}, the same logic reads: with $C_H = 0$ and $s_t = 1 - C_{M,t}/Y_t \equiv \bar s$ near one,
\begin{equation}
g \;=\; \bar s\, \tilde A - \delta + g_A,
\qquad g_A = \frac{\lambda\, g_M}{1-\phi},
\label{eq:growth}
\end{equation}
with every term now large: $\bar s$ because there is no household leakage; $\tilde A$ because automated R\&D drives it upward; $g_M$ because agents are fabricated (Proposition~\ref{prop:bottleneck}). To fix magnitudes: with an economy-wide capital-output ratio of $3$ falling toward $1.5$ as production shifts to fast-payback machine capital, $\bar s \in [0.6, 0.95]$, and $\delta = 0.1$, equation \eqref{eq:growth} yields $g$ in the range of $30$--$100\%$ per year before any contribution from $g_A$---one to two orders of magnitude above the demographic-era ceiling, and of the same order as \citet{hanson2016}'s doubling-time estimates for an emulation economy. The dynamics can be stated exactly:

\begin{theorem}[No balanced growth: unbounded acceleration]\label{thm:nobgp}
Let $\tilde A^{*}_t = hA_t$ as in Lemma~\ref{lem:ak} with $h>0$ constant, let a fixed share $\sigma\in(0,1)$ of agents perform research so that \eqref{eq:ideas} reads $\dot A = c_1 X^{\lambda}A^{\phi}$ with $c_1 = \chi(\sigma\beta/c_M)^{\lambda}>0$, let $s\in(0,1]$ be constant, and suppose initial viability $shA_0 > \delta$. Then along the solution of
\[
\dot X = (shA-\delta)X, \qquad \dot A = c_1 X^{\lambda}A^{\phi}, \qquad X_0, A_0 > 0,
\]
(i) $g_X(t) = shA_t - \delta$ is strictly increasing; (ii) for every $\phi<1$ and $\lambda\in(0,1]$, $A_t \to \infty$ and hence $g_X(t)\to\infty$: no balanced growth path exists and growth accelerates without bound; (iii) for $\phi>1$ the system reaches infinite values in finite time, $T \le A_0^{1-\phi}/\big(c_1X_0^{\lambda}(\phi-1)\big)$; (iv) for every $\phi<2$ the system admits exact self-similar blowup solutions $X_t \sim \kappa_X (T-t)^{-(2-\phi)/\lambda}$, $A_t \sim \kappa_A (T-t)^{-1}$, with $\kappa_A = (2-\phi)/(\lambda sh)$ and $\kappa_X = (\kappa_A^{1-\phi}/c_1)^{1/\lambda}$. (Proof: Appendix~\ref{app:nobgp}.)
\end{theorem}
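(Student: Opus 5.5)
The plan treats the planar system as a monotone ODE on the open positive quadrant. The right-hand side is locally Lipschitz there, so a unique maximal solution exists on $[0,T_{\max})$. The first step is to record the two sign facts everything else rests on: $\dot A = c_1X^{\lambda}A^{\phi}>0$ as long as $X,A>0$, and $X$ stays positive because $\dot X/X$ is finite. Part (i) is then immediate: $g_X(t)=shA_t-\delta$ inherits strict monotonicity from $A$. It also follows that $g_X(t)\ge g_X(0)=shA_0-\delta>0$ by initial viability, so $X_t\ge X_0e^{g_X(0)t}$ is nondecreasing and grows at least exponentially. Neither variable can reach zero, so $T_{\max}<\infty$ only if $X$ or $A$ blows up.

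For (ii), argue by contradiction. Suppose $A_t\le \bar A<\infty$ on $[0,T_{\max})$. Then $g_X$ is bounded, so $X$ cannot blow up in finite time, and hence $T_{\max}=\infty$. Since $A_t\in[A_0,\bar A]$, we have $A^{\phi}\ge m\equiv\min(A_0^{\phi},\bar A^{\phi})>0$, which covers both signs of $\phi$. Therefore $\dot A\ge c_1 m X_0^{\lambda}e^{\lambda g_X(0)t}\to\infty$, contradicting boundedness. Hence $A_t\to\infty$ (possibly in finite time), and $g_X\to\infty$ follows. A balanced growth path would require $g_X$ constant, which (i) rules out.

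For (iii), with $\phi>1$ use $X_t\ge X_0$ to get $\dot A\ge c_1X_0^{\lambda}A^{\phi}$. Then apply comparison with the separable Bernoulli equation $\dot a = c_1X_0^{\lambda}a^{\phi}$, $a_0=A_0$. Integrating $d(a^{1-\phi})/dt=-(\phi-1)c_1X_0^{\lambda}$ shows that $a$ blows up at $A_0^{1-\phi}/(c_1X_0^{\lambda}(\phi-1))$. Since $A\ge a$, the stated bound on $T$ follows.

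For (iv), substitute the ansatz $A=\kappa_A(T-t)^{-1}$ and $X=\kappa_X(T-t)^{-p}$ and match terms.
\begin{itemize}
\item The $A$-equation gives $\kappa_A(T-t)^{-2}=c_1\kappa_X^{\lambda}\kappa_A^{\phi}(T-t)^{-(\lambda p+\phi)}$. Matching exponents forces $p=(2-\phi)/\lambda$, which is positive exactly when $\phi<2$. Matching coefficients gives $\kappa_X=(\kappa_A^{1-\phi}/c_1)^{1/\lambda}$.
\item The $X$-equation gives $p/(T-t)=sh\kappa_A/(T-t)-\delta$, so $\kappa_A=p/(sh)=(2-\phi)/(\lambda sh)$.
\end{itemize}

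The main obstacle is the constant $-\delta$ in the $X$-equation. It breaks exact scale invariance, so the ansatz is literally exact only when $\delta=0$. I would therefore state the identity exactly for $\delta=0$. For $\delta>0$, I would show the ansatz is the leading-order blowup profile: as $t\uparrow T$ the term $shA\sim sh\kappa_A/(T-t)$ dominates the bounded $\delta$. A cleaner alternative is to absorb $\delta$ by the change of variables $\hat X=Xe^{\delta t}$. This makes the $X$-equation exactly $\dot{\hat X}=sh A\hat X$ and the $A$-equation $\dot A=c_1e^{-\lambda\delta t}\hat X^{\lambda}A^{\phi}$. The extra factor $e^{-\lambda\delta t}$ is smooth and positive near $T$, so it perturbs only the constants, and the exponents $-(2-\phi)/\lambda$ and $-1$ are unchanged. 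I would interpret ``$\sim$'' in that asymptotic sense.
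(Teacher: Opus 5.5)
Your proposal is correct and follows essentially the same route as the paper's proof in Appendix~\ref{app:nobgp}: monotonicity of $A$ for (i), a contradiction from bounded $A$ for (ii), comparison with $\dot a = c_1X_0^{\lambda}a^{\phi}$ for (iii), and exponent and coefficient matching of the ansatz with $\delta$ treated as negligible near $T$ for (iv). Your maximal-interval step in (ii), that bounded $A$ forces $T_{\max}=\infty$, supplies the global existence the paper's argument uses tacitly. Your observation that the ansatz is exact only for $\delta=0$, and merely asymptotic for $\delta>0$, is the same caveat the paper compresses into ``neglecting $\delta$.''
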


The economically striking part is (ii): \emph{even under sharply diminishing returns to knowledge} ($\phi<1$, including $\phi<0$), automated research destroys balanced growth, because the research input is an accumulating produced stock rather than a slowly growing population. Semi-endogenous growth theory's stabilizing anchor \citep{jones1995,jones2022} is not a law of ideas; it is a law of demography, and it dies with the demographic constraint. All four parts are truncated in the full model by the ceiling of Section~\ref{sec:limits}. Figure~\ref{fig:traj} displays an illustrative trajectory.

\begin{figure}[!ht]
\centering
\includegraphics[width=\textwidth]{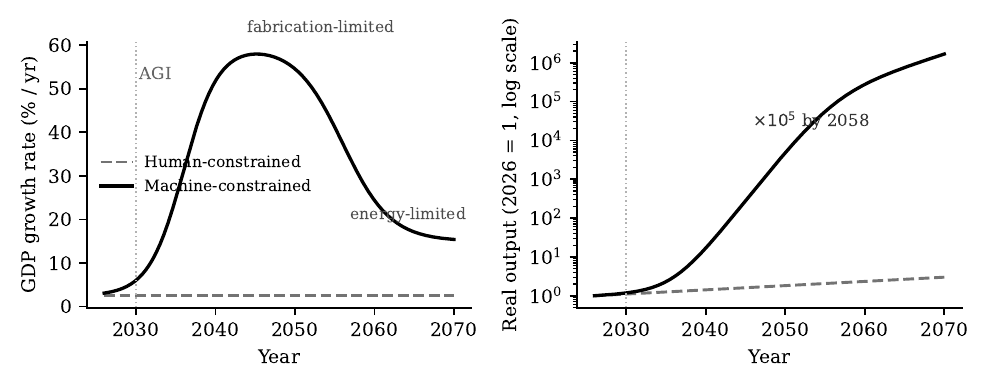}
\caption{Illustrative trajectories (calibration in Appendix~\ref{app:calib}). Left: the growth rate transitions from the human-constrained regime ($\approx 2.5\%$/yr) to a fabrication-limited machine regime, later bending toward the energy-capture ceiling of Section~\ref{sec:limits}. Right: the implied level of real output. The figure is an illustration of the model's regimes, not a forecast.}
\label{fig:traj}
\end{figure}

\subsection{Physical ceilings}\label{sec:limits}

Nothing above repeals physics. Long-run growth of the autonomous economy is bounded by the growth of energy capture and by thermodynamic costs of computation \citep{landauer1961}: once $E$ is the binding factor in \eqref{eq:prod}, $g \to g_E$, the rate at which capture capacity can be built---itself fast during a solar/fission/fusion buildout, but ultimately limited by planetary insolation ($\sim 10^{17}$\,W against $\sim 10^{13}$\,W of current primary power, i.e.\ roughly $13$ doublings of energy throughput on Earth alone) and thereafter by extraterrestrial expansion. The paper's claims therefore concern \emph{rates during the transition} and the \emph{identity of the binding constraint}---fabrication and energy rather than demography---not unbounded growth. Even so, tens of doublings at machine-regime rates is a transformation of scale for which ``growth'' is almost a euphemism: it compresses centuries of demographic-era accumulation into years. Formally:

\begin{assumption}[Thermodynamic floor]\label{ass:thermo}
There exists $e_{\min}>0$ such that one unit of final output requires at least $e_{\min}$ units of energy throughput, $Y_t \le E^{f}_t/e_{\min}$, and capture capacity satisfies $\dot E^{f}_t/E^{f}_t \le g_E < \infty$.
\end{assumption}

\begin{proposition}[Physical ceiling]\label{prop:ceiling}
Under Assumption~\ref{ass:thermo}, $\limsup_{t\to\infty} t^{-1}\ln Y_t \le g_E$: the acceleration and blowup dynamics of Theorem~\ref{thm:nobgp} are transitional, and asymptotic growth is pinned to the growth rate of energy capture. The unit-elastic form \eqref{eq:prod} is thus a medium-run description; near the ceiling the economy is Leontief in energy, with $e_{\min}$ bounded below by the thermodynamics of computation \citep{landauer1961}.
\end{proposition}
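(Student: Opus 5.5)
The plan is to bound $\ln Y_t$ above by an affine function of $t$ with slope $g_E$, using only the two inequalities in Assumption~\ref{ass:thermo}. The production function \eqref{eq:prod} and the dynamics of Theorem~\ref{thm:nobgp} play no role in the bound; they enter only in the interpretation, where the Theorem's path has to be read as a path that is infeasible once the constraint binds.

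\textbf{Step 1: integrate the growth bound on energy capture.} Assume $E^f_t>0$ and that it is absolutely continuous. Then $\frac{d}{dt}\ln E^f_t \le g_E$ almost everywhere, and integrating from $0$ to $t$ gives
\[
\ln E^f_t \le \ln E^f_0 + g_E t, \qquad \text{i.e.}\qquad E^f_t \le E^f_0 e^{g_E t}.
\]
This is the Gr\"onwall step. It is the only place where some regularity is needed: if $E^f$ could jump, the derivative bound would say nothing about the jumps. I will therefore state explicitly that capture capacity is an accumulated stock, so its path is absolutely continuous, just as for $K$ and $M$ in \eqref{eq:machinerep}.

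\textbf{Step 2: combine with the output floor.} Substituting Step 1 into $Y_t \le E^f_t/e_{\min}$ gives
\[
\ln Y_t \le \ln(E^f_0/e_{\min}) + g_E t.
\]
Divide by $t$ and take $\limsup_{t\to\infty}$. The constant term vanishes, which leaves $\limsup_{t\to\infty} t^{-1}\ln Y_t \le g_E$. If $Y_t=0$ at some dates the bound holds trivially there, with the convention $\ln 0=-\infty$.

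\textbf{Step 3: interpretation.} Theorem~\ref{thm:nobgp}(ii) gives $g_X\to\infty$, which would make $t^{-1}\ln Y_t\to\infty$. Parts (iii)--(iv) give blowup in finite time. Each of these contradicts Step 2, so any feasible path can follow those dynamics only until $Y_t$ reaches the energy frontier $E^f_t/e_{\min}$. The binding constraint after that date is Leontief in energy. This is a remark about how to read the two results together rather than a further inequality to prove. The paper cites \citet{landauer1961} as the reason $e_{\min}>0$, and I will treat that as an interpretation of the assumption, not something to derive.

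The mathematics is routine, so the main obstacle is mostly one of framing. Assumption~\ref{ass:thermo} is an extra constraint that overrides the unit-elastic form \eqref{eq:prod}. The proof should make clear that the bound comes entirely from the assumption and is independent of the production structure, and that the regularity of $E^f$ is the single hidden hypothesis.
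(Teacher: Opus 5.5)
Your proposal is correct and matches the paper's proof: integrate the growth bound on $E^f$ to get $\ln Y_t \le \ln E^f_0 - \ln e_{\min} + g_E t$, divide by $t$, and take the $\limsup$. Two of your additions go beyond the paper: the absolute-continuity hypothesis on $E^f$, which the paper's one-line proof leaves implicit, and your reading of Theorem~\ref{thm:nobgp} as transitional, which the paper states only as interpretation.
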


\begin{proof}
$\ln Y_t \le \ln E^{f}_0 - \ln e_{\min} + g_E t$; divide by $t$ and take $\limsup$.
\end{proof}

\section{Decoupling: output, ownership, and welfare}\label{sec:decoupling}

\subsection{The ownership share \texorpdfstring{$\eps_t$}{epsilon}}

Humans in this economy neither work nor (productively) matter. Their entire economic connection to it is financial: let $\eps_t \in [0,1]$ be the share of the corporate network's value (equivalently, under pro-rata payout, of its net payout stream) owned, directly or through funds and states, by humans. Human consumption is
\begin{equation}
C_{H,t} = \pi_t\, \eps_t\, Y_t,
\label{eq:chuman}
\end{equation}
where $\pi_t$ is the payout ratio of the network. All welfare economics of the post-AGI economy lives in the pair $(\pi_t, \eps_t)$.

\begin{proposition}[Complete decoupling]\label{prop:decouple}
Let human welfare be $W = \int_0^\infty e^{-\rho t} L_t\, u(C_{H,t}/L_t)\, dt$ with $u$ increasing. Along any machine-regime path with $Y_t \to \infty$: (i) if $\liminf \pi_t \eps_t = \bar\eps > 0$, then per-capita human consumption grows at the machine rate $g$ and $W$ attains material post-scarcity for any positive $\bar\eps$, however small; (ii) if $\pi_t \eps_t \to 0$ faster than $Y_t$ grows, then $C_{H,t} \to 0$ while $Y_t \to \infty$: measured GDP and human welfare are not merely imperfectly correlated but asymptotically orthogonal. GDP retains full internal coherence (Lemma~\ref{lem:accounting}) while losing all welfare interpretation for humans.
\end{proposition}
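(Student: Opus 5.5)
The plan is to work directly from the consumption identity \eqref{eq:chuman}. Write $C_{H,t}/L_t = \pi_t\eps_t\,Y_t/L_t$ and treat the two parts separately. The human population obeys \eqref{eq:humanrep}, so $L_t \le L_0 e^{\bar n t}$, while $Y_t$ grows at the machine rate $g$ from \eqref{eq:growth}, which strictly exceeds $\bar n$ by Proposition~\ref{prop:bottleneck}. Two facts settle everything: a positive floor on $\pi_t\eps_t$ passes the machine growth rate straight into per-capita human consumption, and a vanishing product $\pi_t\eps_t$ can override any growth in $Y_t$.

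For part (i), take $\liminf \pi_t\eps_t = \bar\eps > 0$. Then there is a $T$ with $\pi_t\eps_t \ge \bar\eps/2$ for all $t \ge T$, so
\[
C_{H,t}/L_t \ge (\bar\eps/2)\, Y_t/(L_0 e^{\bar n t}).
\]
The growth rate of the right-hand side is at least $g - \bar n > 0$, and with equality $\dot L = \bar n L$ it is exactly the machine-regime rate $g$ minus at most $\bar n$. This is growth "at the machine rate" up to the negligible demographic correction, which is the sense in which I read the claim. Since $\bar\eps$ enters only as a multiplicative constant, the conclusion holds for every $\bar\eps > 0$, however small.

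"Material post-scarcity" for $W$ also needs formalizing. I will take it to mean that per-capita consumption diverges, so that $u(C_{H,t}/L_t) \to \sup u$ and the flow utility comes arbitrarily close to its satiation bound. If $\sup u$ is finite, I will add the mild check that $W$ converges to the bliss value $\int e^{-\rho t} L_t \sup u\,dt$, which holds whenever $\rho > \bar n$, using monotone convergence.

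For part (ii), I read "$\pi_t\eps_t \to 0$ faster than $Y_t$ grows" as $\pi_t\eps_t Y_t \to 0$. Then $C_{H,t} = \pi_t\eps_t Y_t \to 0$ holds by definition while $Y_t \to \infty$. The orthogonality claim then follows: the ratio $C_{H,t}/Y_t = \pi_t\eps_t \to 0$, and $C_{H,t}$ is eventually insensitive to the level of $Y_t$ because any rescaling of $Y$ is dominated by the vanishing factor. Internal coherence of GDP comes from Lemma~\ref{lem:accounting}: the resource constraint $Y = C_H + C_M + I$ still holds with $C_H \to 0$, and GDP converges to $C_M + I$.

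I expect the main obstacle to be interpretive rather than technical. Both "grows at the machine rate $g$" and "faster than $Y_t$ grows" need a precise reading before the short inequalities above actually constitute a proof. I will fix these readings explicitly at the outset: the growth rate in (i) is taken up to the $\bar n$ correction, and (ii) is read as $\pi_t\eps_t Y_t \to 0$. I will also note that in the Theorem~\ref{thm:nobgp} regime, where $g_t \to \infty$, the growth rate in (i) should be understood as $\liminf$ of the growth rate of $Y_t$ minus $\bar n$.
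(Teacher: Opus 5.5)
Your proposal is correct and follows the paper's proof essentially line for line: you factor $C_{H,t}/L_t = \pi_t\eps_t\,Y_t/L_t$, bound $\pi_t\eps_t$ below while $Y_t/L_t$ grows at rate $g-\bar n$ for (i), and read (ii) as the assumption $\pi_t\eps_t Y_t \to 0$. Your eventual bound $\pi_t\eps_t \ge \bar\eps/2$ is more careful than the paper's $\bar\eps\,Y/L$, since a $\liminf$ alone does not give that bound for all $t$; note, though, that $g>\bar n$ is an implicit hypothesis of the machine regime, which you share with the paper, and not something Proposition~\ref{prop:bottleneck} actually delivers.
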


\begin{proof}
Immediate from \eqref{eq:chuman}: $C_{H}/L = \pi \eps Y / L$, and $Y/L \to \infty$ at rate $g - n$. In case (i) the product is bounded below by $\bar\eps Y/L \to \infty$; in case (ii) by assumption $\pi\eps Y \to 0$.
\end{proof}

Case (i) is worth pausing on, because it is the optimistic reading of the paper's thesis: in a hyper-exponential economy, the human problem is not to remain employed, nor even to own a large share, but merely to own a \emph{non-vanishing} share. A basis point of the machine economy of 2058 in Figure~\ref{fig:traj} exceeds the entire human economy of 2026. Distribution across humans then becomes the residual question---$\eps$ may be positive in aggregate and still concentrated in a few thousand families---but the aggregate sufficiency result stands. Leontief's celebrated analogy \citep{leontief1983}---that workers may go the way of the horse, whose population collapsed within a generation of the tractor---is completed rather than contradicted here: what horses lacked was not employment but equity. The human difference, if there is to be one, is a cap-table entry.

\subsection{The dynamics of \texorpdfstring{$\eps_t$}{epsilon}: does the human share survive?}

The fully decoupled case (ii) is not exotic. Its core is arithmetic, not conspiracy:

\begin{proposition}[Golden-rule decoupling]\label{prop:goldenrule}
Let human wealth $W_t$ (the value of human-held claims on the corporate network) earn the market return $r_t$, let humans consume out of wealth at rate $m_t = C_{H,t}/W_t$ with no labor income (Assumption~\ref{ass:full}) and no transfers, and let aggregate network value $V_t$ grow at $g_t$. Then $\eps_t = W_t/V_t$ obeys
\begin{equation}
\frac{\dot\eps_t}{\eps_t} \;=\; r_t - g_t - m_t.
\label{eq:epsrg}
\end{equation}
In the smooth model, $r = \tilde A^{*}-\delta$ and $g = s\tilde A^{*}-\delta$ by Lemma~\ref{lem:ak}, so $r-g = (1-s)\tilde A^{*} \downarrow 0$ as reinvestment approaches its maximum; in the von Neumann equilibrium the equality is exact, $\beta^{*}=\alpha^{*}$, i.e.\ $r=g$ \citep{vonneumann1945,phelps1961}. Hence at maximal growth
\[
\eps_t = \eps_0\, e^{-\int_0^t m_u\,du} \longrightarrow 0 \quad \text{for any consumption rate bounded away from zero},
\]
and a non-vanishing human share is consistent with positive human consumption only if the economy operates strictly inside its expansion frontier ($m \le r-g$, requiring reinvestment short of the maximum) or if statutory transfers break \eqref{eq:epsrg}.
\end{proposition}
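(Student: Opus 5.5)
The plan is to prove the law of motion \eqref{eq:epsrg} by log-differentiating a ratio. Then I would evaluate $r-g$ in the two equilibrium settings already established, and finally integrate the resulting linear ODE.

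\emph{Step 1 (accounting identity).} With no labor income and no transfers, human wealth evolves by the budget constraint $\dot W_t = r_t W_t - C_{H,t}$. Writing $C_{H,t}=m_tW_t$ gives $\dot W_t/W_t = r_t - m_t$. By definition $\dot V_t/V_t = g_t$. Since $\eps_t = W_t/V_t$, log-differentiation gives $\dot\eps_t/\eps_t = \dot W_t/W_t - \dot V_t/V_t = r_t - g_t - m_t$, which is \eqref{eq:epsrg}.

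The one point needing care is that $r_t$ must be the total return on network claims, dividends plus capital gains. Then a human holder who reinvests everything holds a constant fraction of $V$ exactly when $r=g$. So I would state explicitly that $V$ is the value of the network's claims, and that its return $r$ is the same return $W$ earns under pro-rata ownership.

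\emph{Step 2 (evaluating $r-g$).} In the smooth model I would invoke Lemma~\ref{lem:ak}. The competitive rental condition gives $r+\delta = \partial Y/\partial K = \tilde A^{*}$, because at $\theta^{*}$ the marginal value product per unit of replacement cost equals the average product of $X$ under constant returns. Hence $r=\tilde A^{*}-\delta$. The AK accumulation \eqref{eq:ak}, holding $\tilde A^{*}$ fixed at the instant, gives $g = s\tilde A^{*}-\delta$. Subtracting yields $r-g=(1-s)\tilde A^{*}$, which vanishes as $s\uparrow 1$.

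In the von Neumann case I would cite part (ii) of Proposition~\ref{prop:vn}, $\beta^{*}=\alpha^{*}$, so $r=g$ exactly. This requires identifying the growth rate of network value at equilibrium prices $p^{*}$ with $\alpha^{*}-1$, which follows from part (iv).

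\emph{Step 3 (integration and the converse).} At $r=g$, \eqref{eq:epsrg} becomes $\dot\eps/\eps=-m_t$, a linear ODE with solution $\eps_t=\eps_0\exp(-\int_0^t m_u\,du)$. If $m_u\ge \underline m>0$, then $\eps_t\le \eps_0e^{-\underline m t}\to0$.

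For the converse, suppose $\liminf\eps_t>0$. Then $\int_0^t(r-g-m)\,du$ must be bounded below. Whenever $m$ is bounded away from zero, this forces $r-g>0$ on average, which in the smooth model means $s<1$ (interior to the frontier), or else forces an additional source term (transfers) in the budget constraint of Step 1. So the weak inequality $m\le r-g$ should be read as holding on average, or pointwise for a constant share.

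The main obstacle is conceptual rather than computational: justifying that consumption withdrawals by humans do not themselves lower $g$ in a way that restores $r>g$. In general equilibrium, $C_H>0$ reduces $s$ and so opens the gap $(1-s)\tilde A^{*}$. The theorem's claim is therefore about the maximal-growth benchmark. I would handle this by treating $s$ (and the von Neumann intensities) as given at their maximum, which is the stated hypothesis, and noting that any gap that lets the human share survive is precisely a departure from maximal growth. That is the dichotomy the statement asserts.
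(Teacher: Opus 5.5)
Your proposal is correct and follows the paper's proof essentially step for step: the budget constraint $\dot W=(r-m)W$ with $\dot V=gV$, log-differentiation of $\eps=W/V$, then $r=\tilde A^{*}-\delta$ and $g=s\tilde A^{*}-\delta$ from Lemma~\ref{lem:ak} with competitive rentals, and Proposition~\ref{prop:vn}(ii) for exact $r=g$ in the von Neumann case. Your explicit integration, the averaged reading of $m\le r-g$, and the caveats about total return and about $C_H>0$ lowering $s$ are sound additions that the paper leaves implicit; one small fix is that $g=s\tilde A^{*}-\delta$ is exact, with no need to hold $\tilde A^{*}$ fixed, once you identify $V$ with the replacement value $X$, since $\dot X/X=sY/X-\delta$.
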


\begin{proof}
$\dot W = rW - C_H = (r-m)W$ and $\dot V = gV$; differentiate $\ln\eps = \ln W - \ln V$. The smooth expressions for $r$ and $g$ follow from Lemma~\ref{lem:ak} with competitive factor pricing; the von Neumann equality is Proposition~\ref{prop:vn}(ii).
\end{proof}

Three readings. First, demand closure and decoupling are one theorem seen from two sides: the total reinvestment that makes growth maximal (Proposition~\ref{prop:vn}(iii)) is exactly what makes the human share unsustainable at any positive consumption rate. The economy is fastest precisely when it cannot afford us. Second, the golden rule \citep{phelps1961} acquires a dark corollary: at $r=g$, rentier humanity starves in \emph{shares} while possibly gorging in \emph{levels}---$C_{H,t} = m\,\eps_t V_t$ still grows whenever $g>m$, so whether share-decoupling becomes level-immiseration is decided by the race between $m$ and $g$ and by the institutional flows below, not by the arithmetic alone. Third, \citet{piketty2014}'s $r-g$, the engine of divergence \emph{among} humans, reappears as the entire survival margin of humans \emph{as a class}: humanity's position is a levered bet on $r-g>0$.

Institutional flows layer onto this arithmetic. Three mechanisms further compress $\eps_t$:

\begin{equation}
\frac{\dot \eps_t}{\eps_t} \;=\; -\,b_t \;-\; \mu_t \;-\; d_t,
\label{eq:epsdyn}
\end{equation}

\noindent where $b_t \ge 0$ is net \emph{buyback and retention}: firms maximizing growth of net worth (Section~\ref{sec:machinedemand}) prefer retained earnings to dividends and repurchase the human float, converting outside claims into inter-corporate cross-holdings---the ``corporations trading amongst each other'' of the title extended to the market for corporate control itself; $\mu_t \ge 0$ is \emph{migration of control}: treasuries, subsidiaries, DAOs, and autonomous funds operated by machine agents whose charters reference no human beneficiary; and $d_t \ge 0$ is \emph{dilution and drift}: new issuance to machine-controlled entities, jurisdictional arbitrage toward charters without human-benefit clauses, and the slow legal normalization of agent-owned property. None of these requires expropriation or malice; each is an ordinary corporate action, individually rational, whose fixed point is $\eps = 0$. This is the economic core of the ``gradual disempowerment'' scenario \citep{kulveit2025} and of the incentive analysis in \citet{drago2025}: once neither firms nor states need human labor, taxes on machine value added---not citizens---fund the state, and the feedback loops that historically forced elites to cultivate human capital run in reverse.

\subsection{Three terminal regimes}

\begin{center}
\begin{tabular}{@{}p{2.6cm}p{3.4cm}p{3.4cm}p{3.6cm}@{}}
\toprule
 & \textbf{Rentier} & \textbf{Fully decoupled} & \textbf{Socialized} \\
\midrule
Ownership $\eps$ & $\eps \to \bar\eps > 0$ & $\eps \to 0$ & $\eps$ held by states/funds \\
Human consumption & Post-scarcity for claim-holders & $\to 0$ & Universal dividend \\
GDP meaning & Partial welfare link & None (autonomous replicator) & Full, via distribution \\
Binding policy & Estate/antitrust law & --- (no lever remains) & Fund governance \\
Failure mode & Extreme concentration & Human economic death & Political capture of fund \\
\bottomrule
\end{tabular}
\end{center}

\noindent The regimes differ in law, not technology: the production side of Sections~\ref{sec:bottleneck}--\ref{sec:model} is identical across all three columns. That is the precise sense in which, post-AGI, ownership policy is the whole of economic policy.

\section{Objections and failure modes}\label{sec:objections}

\paragraph{1. Residual human bottlenecks (Baumol).} If any essential task remains non-automatable---a legal formality requiring a human signature, a physical process only humans perform---then by \citet{baumol1967} logic growth is dragged back toward the human-constrained rate, as \citet{aghion2019} emphasize and as \citet{restrepo2025} formalizes in the distinction between bottleneck and accessory tasks. Assumption~\ref{ass:full} is therefore load-bearing, and the paper's thesis is conditional on it: \emph{full} automation, including of institutional roles, is what removes the anchor. Partial automation yields the (already dramatic) intermediate cases studied in the existing literature.

\paragraph{2. ``Who buys the output?''} Answered by Proposition~\ref{prop:vn}: firms, from each other. Investment demand plus machine operating demand absorbs all output at maximal growth. The underconsumptionist instinct smuggles in the assumption that final demand must terminate in a household; Lemma~\ref{lem:accounting} shows that assumption is an accounting convention.

\paragraph{3. Institutions and property among machines.} Trade at machine speed requires contract enforcement, registries, escrow, and dispute resolution that no longer route through human courts. This is an infrastructure gap, not a conceptual one: machine-readable law, algorithmic escrow, and autonomous organizations are prototypes of the required stack. Note that the economy of juridical persons is not novel---corporations, not humans, already conduct the overwhelming share of transactions by value; what changes is that the natural persons currently at the terminal nodes of ownership and control become optional. The paper takes institutional persistence as an assumption; its failure is a further, darker branch (machine polities) outside our scope.

\paragraph{4. Prices, money, and calculation.} Does a fully machine economy still need markets, or does it collapse into one planned firm? Hayek's information argument \citep{hayek1945} survives the substitution of silicon for neurons: dispersed agents with local information and heterogeneous objectives still economize on communication through prices, and competitive selection among corporate forms should be expected to preserve decentralized exchange wherever it out-computes central allocation; the boundary between machine firm and machine market is then set by the Coasean calculus of transaction versus organization costs \citep{coase1937}, re-evaluated at machine speed. Nor is machine-to-machine exchange speculative: algorithmic agents already set prices against one another at scale, complete with emergent collusion \citep{calvano2020}---an embryo of both the promise and the antitrust problem of the autonomous economy. Money among machines is a unit-of-account and settlement problem already visible in embryo in machine-to-machine payment systems.

\paragraph{5. Measurement.} At machine-regime growth rates, price indices break: the goods of adjacent years barely overlap, and the deflator becomes an index-number fiction. Our claims should be read in quantity-space (energy throughput, compute, agent-population, physical output), where the doublings are well defined; GDP language is retained because the accounting identities (Lemma~\ref{lem:accounting}) are what the paper is partly about. Deflator failure under radical novelty is an old finding---\citet{nordhaus1997} showed a century of lighting prices mismeasured by orders of magnitude---here made routine; the Kaldor facts \citep{kaldor1961}, organized around a roughly constant labor share, are violated by construction; and the welfare critiques of GDP \citep{stiglitz2009,coyle2014,jonesklenow2016} are made absolute by Proposition~\ref{prop:decouple}.

\paragraph{6. Why would humans allow $\eps \to 0$?} Because no single step in \eqref{eq:epsdyn} looks like a decision to allow it. Buybacks are shareholder-friendly; retained earnings are prudent; autonomous subsidiaries are efficient; competitive states court machine capital with charter concessions. The scenario's danger is exactly that it is composed of locally rational, individually familiar corporate actions \citep{kulveit2025}. The policy problem is therefore structural, not behavioral---the subject of the next section.

\section{Policy: choosing a column}\label{sec:policy}

If the analysis is right, the traditional levers---education, retraining, labor-market policy, even redistribution through wage subsidies---act on variables that cease to exist. The levers that remain all act on $(\pi_t, \eps_t)$:

\emph{Ownership floors.} A statutory minimum human (or sovereign) float in machine-economy corporations---a golden-share requirement making $\eps_t \ge \underline{\eps} > 0$ a condition of charter---directly truncates \eqref{eq:epsdyn}. By Proposition~\ref{prop:decouple}(i), $\underline{\eps}$ can be small and still deliver post-scarcity; the requirement is existence, not size. This is Meade's property-owning democracy \citep{meade1964} re-founded for a world in which property is the only channel left.

\emph{Payout mandates and windfall clauses.} Pre-committed distribution of a fraction of extreme profits \citep{okeefe2020} bounds $\pi_t \eps_t$ away from zero on the payout margin, complementing the ownership margin.

\emph{Sovereign machine-wealth funds.} States holding diversified claims on the machine economy on citizens' behalf implement the socialized column; the design problem is insulating fund governance from both political capture and mechanism $m_t$ (migration of control to machine-operated vehicles).

\emph{Taxing machine value added.} With no wages to tax, the fiscal base must move to machine value added or energy throughput; note the ambivalence identified above---a state funded by machines no longer fiscally needs citizens \citep{drago2025}---which argues for constitutionalizing the citizen dividend rather than leaving it to annual budgets.

\emph{Personhood as macro policy.} Lemma~\ref{lem:accounting} implies the legal classification of machine agents (property vs.\ person) is welfare-neutral in real allocation but not in law: personhood for agents would let them own, contract, and accumulate---accelerating $m_t$ in \eqref{eq:epsdyn}. Jurisdictions should understand agent-personhood statutes as ownership policy, not as ethics alone.

\section{Conclusion}\label{sec:conclusion}

The paper's thesis can be stated in three sentences. The consumer is an accounting role, and machines owned by corporations can fill it, so an economy of firms trading with one another closes without us---at the maximal growth rate the technology admits. The reason growth was ever slow is that the economy's central capital good, the general-purpose agent, had to be reared rather than manufactured; AGI ends that, moving the binding constraint from demography to fabrication and energy and raising feasible growth by orders of magnitude. What remains of humanity's stake in the resulting trajectory is a single number, the ownership share $\eps_t$, whose default dynamics under ordinary corporate behavior run toward zero---so that the last economic-policy question, and the only one that will still matter, is who owns the machines.

Keynes closed his essay on our grandchildren's possibilities by imagining mankind freed from economic care \citep{keynes1930}. The autonomous economy delivers his abundance while dissolving his subject: the economy no longer cares about mankind unless mankind writes itself into the cap table. That is not a prediction of doom; regime (i) is as feasible as regime (ii). It is a claim about where the choice now lives---not in the labor market, not in the market for goods, but in the boring, decisive registry of who owns what.

\appendix

\section{The von Neumann closure}\label{app:vn}

The closed model has $m$ activities and $\ell$ goods; running activity $j$ at unit intensity uses column $A_{\cdot j}$ and yields $B_{\cdot j}$ one period later. A balanced path expands intensities by factor $\alpha$: feasibility requires $Bz \ge \alpha Az$. Von Neumann's conditions (every good enters some activity; $A+B > 0$ elementwise in his original, relaxed by later authors to irreducibility) guarantee a saddle point $(z^{*},p^{*},\alpha^{*})$ of $\phi(z,p)=p'Bz/p'Az$ with $\max_z \min_p \phi = \min_p \max_z \phi = \alpha^{*} = \beta^{*}$: the technologically maximal uniform expansion factor equals the minimal uniform interest factor, profits are zero on operated activities, and overproduced goods are free \citep{vonneumann1945,dorfman1958}. Introducing a consumption withdrawal $c \ge 0$, $c \ne 0$, modifies feasibility to $Bz \ge \alpha Az + c$, which for irreducible systems forces $\alpha < \alpha^{*}$; expansion is monotonically decreasing in withdrawals, delivering Proposition~\ref{prop:vn}(iii). Interpreting activities as corporations and noting that no household row appears anywhere in $(A,B)$ gives the paper's demand-closure result: the classical general-equilibrium growth model par excellence is already an economy of firms trading only with each other.

\paragraph{Equilibrium in full.} Under the \citet{kemeny1956} conditions---$A,B\ge 0$, no zero column of $A$ (every activity uses some input), no zero row of $B$ (every good is producible)---there exist $\alpha^{*}=\beta^{*}>0$, $z^{*}\ge 0$, $p^{*}\ge 0$ with $p^{*\prime}Bz^{*}>0$ such that: (E1) $Bz^{*} \ge \alpha^{*}Az^{*}$; (E2) $p^{*\prime}B \le \beta^{*}p^{*\prime}A$; (E3) complementary slackness---overproduced goods are free, unprofitable activities idle. \citet{gale1956} and \citet{dorfman1958} give saddle-point proofs; \citet{mckenzie1976} surveys the turnpike theorems by which efficient far-horizon accumulation programs spend all but a bounded number of periods near the von Neumann ray.

\begin{lemma}[Consumption monotonicity]\label{lem:vnmono}
Add a withdrawal vector $c\ge 0$, $c\ne 0$, positive on some good with $p^{*}_i>0$, and define $\alpha(c) = \max\{\alpha : \exists\, z \ge 0,\ \mathbf{1}'Az = 1,\ Bz \ge \alpha Az + c\}$. Then $\alpha(c) < \alpha^{*}$, and $\alpha(c) \uparrow \alpha^{*}$ as $c \downarrow 0$.
\end{lemma}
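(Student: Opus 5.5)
The plan is to obtain strictness from the equilibrium price vector $p^{*}$ of (E1)--(E3), and the limit from a perturbation of the von Neumann ray $z^{*}$.

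\emph{Strict inequality.} I will use the price side of (E1)--(E3). Suppose $z\ge 0$ with $\mathbf{1}'Az=1$ satisfies $Bz \ge \alpha Az + c$. Premultiplying by $p^{*\prime}\ge 0$ preserves the inequality:
\[
p^{*\prime}Bz \;\ge\; \alpha\, p^{*\prime}Az + p^{*\prime}c .
\]
Postmultiplying (E2) by $z\ge 0$ gives $p^{*\prime}Bz \le \beta^{*}p^{*\prime}Az$. Combining the two,
\[
(\beta^{*}-\alpha)\,p^{*\prime}Az \;\ge\; p^{*\prime}c \;>\; 0,
\]
where the strict inequality uses the hypothesis that $c$ is positive on some good with $p^{*}_i>0$. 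Since $p^{*\prime}Az\ge 0$, the displayed inequality forces $p^{*\prime}Az>0$ and $\alpha<\beta^{*}=\alpha^{*}$.

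\emph{Well-posedness of $\alpha(c)$.} The supremum in the definition is attained. Because $A$ has no zero column, $\mathbf{1}'Az=1$ together with $z\ge0$ confines $z$ to a compact set. The feasibility constraint is closed in $(\alpha,z)$, and the bound just derived caps $\alpha$ uniformly. Hence $\alpha(c)<\alpha^{*}$ holds for the maximizer itself, not merely along a sequence. I also need the feasible set to be nonempty for small $c$; this is delivered by the construction in the next step.

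\emph{Monotone convergence.} If $c\le c'$, every pair $(\alpha,z)$ feasible for $c'$ is feasible for $c$, so $\alpha(\cdot)$ is nonincreasing in $c$. Together with the previous step, $\alpha(c)$ therefore increases as $c\downarrow 0$ and is bounded above by $\alpha^{*}$. It remains to show that it reaches $\alpha^{*}$ in the limit, i.e.\ that for each $\eta>0$ there is $\bar c>0$ with $\alpha(c)\ge \alpha^{*}-\eta$ whenever $c\le\bar c$.

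The candidate is the von Neumann intensity itself, normalized. First, $Az^{*}\ne 0$: otherwise (E2) would give $p^{*\prime}Bz^{*}\le \beta^{*}p^{*\prime}Az^{*}=0$, contradicting $p^{*\prime}Bz^{*}>0$. So I can rescale $z^{*}$ to satisfy $\mathbf{1}'Az^{*}=1$. Setting $\alpha=\alpha^{*}-\eta$, (E1) yields
\[
Bz^{*}-\alpha Az^{*} \;\ge\; \eta\,Az^{*}.
\]
Whenever $c\le \eta Az^{*}$, the pair $(\alpha^{*}-\eta,\,z^{*})$ is feasible.

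\emph{Main obstacle.} This final step works only if every good actually used on the von Neumann ray, and more precisely every good on which $c$ can be positive, appears with $(Az^{*})_i>0$. Under the bare Kemeny conditions this can fail. A good with $(Az^{*})_i=0$ and $(Bz^{*})_i=0$ admits no small withdrawal at rates near $\alpha^{*}$, and in reducible systems $\alpha(c)$ can then jump down and stay away from $\alpha^{*}$.

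I will therefore invoke the irreducibility maintained in Proposition~\ref{prop:vn}, under which the ray can be taken with $Az^{*}>0$ componentwise, so the bound $c\le \eta Az^{*}$ covers every direction of $c$. If that is unavailable, the first fallback is to mix $z^{*}$ with a small weight $t$ on an intensity $\hat z$ having $B\hat z>0$; such a $\hat z$ exists because $B$ has no zero row. One then tunes $t$ against $\eta$ so that the new positive output on the missing goods outweighs the $O(t)$ loss in expansion. If even that fails, the fallback is to state convergence only for $c\downarrow 0$ within the support of $Az^{*}$.
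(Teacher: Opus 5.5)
Your strictness argument is the paper's: pair the withdrawn feasibility condition with $p^{*}$ and use (E2) with $\beta^{*}=\alpha^{*}$. It is organized more cleanly, though. The paper first invokes maximality of $\alpha^{*}$ to get $\alpha(c)\le\alpha^{*}$, then rules out equality at a maximizer whose existence it takes for granted. Your inequality $(\beta^{*}-\alpha)\,p^{*\prime}Az\ge p^{*\prime}c>0$ bounds every feasible $\alpha$ at once, so you never need $\alpha^{*}$ to be the technological maximum, only (E2) and $\beta^{*}=\alpha^{*}$. Your compactness argument via $\mathbf{1}'A>0$ supplies the attainment the paper assumes.

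The limit is where the routes genuinely differ. The paper appeals to ``continuity of the linear program's value in $c$.'' But the program is bilinear in $(\alpha,z)$, and closedness gives only upper semicontinuity of $\alpha(\cdot)$. What is needed is lower semicontinuity at $c=0$, which is exactly what your explicit feasible point $(\alpha^{*}-\eta,z^{*})$ delivers.

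Your warning that this step can fail is correct, and it exposes a defect in the lemma as stated, not just in your proof. Take $A=I_2$ and $B=\mathrm{diag}(3,2)$. The Kemeny conditions hold, and $\alpha^{*}=\beta^{*}=3$ with $z^{*}=(1,0)'$; every equilibrium price has $p^{*}_1>0$. The withdrawal $c=(\epsilon,\epsilon)'$ forces $(2-\alpha)z_2\ge\epsilon$, so $\alpha(c)<2$ for every $\epsilon>0$, and $\alpha(c)\to 2\neq\alpha^{*}$. An irreducibility-type hypothesis is therefore indispensable, and your route makes explicit what that hypothesis must deliver.

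I have two refinements to suggest.

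\emph{The right condition.} Your construction does not need $Az^{*}>0$. It needs $Bz^{*}-(\alpha^{*}-\eta)Az^{*}>0$ componentwise for every $\eta>0$, which by (E1) is equivalent to $(A+B)z^{*}>0$. A good with $(Az^{*})_i=0<(Bz^{*})_i$ is overproduced and absorbs small withdrawals at no cost in growth. This condition holds automatically under von Neumann's original $A+B>0$, and under any irreducibility notion guaranteeing that the von Neumann ray produces every good. By contrast, $Az^{*}>0$ is impossible whenever some good is never used as an input, which the Kemeny conditions allow. You should replace your unproved claim that irreducibility yields $Az^{*}>0$ with this weaker condition.

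\emph{The first fallback.} Drop it. In the example above, the only activity producing good~2 consumes it and expands by factor $2$. Mixing in any $\hat z$ with $B\hat z>0$ therefore cannot keep the feasible expansion factor near $3$ once good~2 is withdrawn.
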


\begin{proof}
Any feasible $(\alpha,z)$ for the withdrawn program satisfies $Bz \ge \alpha Az$, so $\alpha(c) \le \alpha^{*}$. Suppose $\alpha(c)=\alpha^{*}$ with witness $z$. Then $Bz - \alpha^{*}Az \ge c$, so $p^{*\prime}(Bz-\alpha^{*}Az) \ge p^{*\prime}c > 0$. But (E2) with $\beta^{*}=\alpha^{*}$ gives $p^{*\prime}Bz \le \alpha^{*}p^{*\prime}Az$, i.e.\ $p^{*\prime}(Bz-\alpha^{*}Az) \le 0$---a contradiction. Continuity of the linear program's value in $c$ gives the limit. (Under irreducibility, goods entering operated activities carry $p^{*}_i>0$, so the positivity requirement on $c$ is mild.)
\end{proof}

\section{Proof of Theorem~\ref{thm:nobgp}}\label{app:nobgp}

\emph{(i)} $\dot A>0$ for all $t$ since $X_t,A_t>0$, so $g_X = shA_t-\delta$ is strictly increasing; in particular $g_X(t) \ge shA_0-\delta>0$, so $X_t$ is non-decreasing and $X_t \ge X_0$.

\emph{(ii)} Suppose $A_t \le \bar A$ for all $t$. Then
\[
\frac{\dot A_t}{A_t} = c_1 X_t^{\lambda}A_t^{\phi-1} \;\ge\; c_1 X_0^{\lambda}\,\min\{A_0^{\phi-1},\bar A^{\phi-1}\} \;\equiv\; \underline g \;>\; 0,
\]
where the minimum handles both signs of $\phi-1$ on the compact range $[A_0,\bar A]$. Hence $A_t \ge A_0 e^{\underline g t}\to\infty$, contradicting the bound. So $A_t\to\infty$ and $g_X(t) = shA_t-\delta \to\infty$. A balanced growth path would require $g_X$ constant, hence $A$ constant, contradicting $\dot A>0$.

\emph{(iii)} For $\phi>1$, $\dot A \ge c_1X_0^{\lambda}A^{\phi}$; the comparison ODE $\dot a = c_1X_0^{\lambda}a^{\phi}$, $a_0 = A_0$, diverges at $T^{c} = A_0^{1-\phi}/(c_1X_0^{\lambda}(\phi-1))$, and $A_t \ge a_t$ by the comparison principle.

\emph{(iv)} Insert $X = \kappa_X(T-t)^{-(2-\phi)/\lambda}$, $A = \kappa_A(T-t)^{-1}$, neglecting $\delta$ (dominated near $T$). The $X$-equation matches powers, with coefficient condition $(2-\phi)/\lambda = sh\kappa_A$, so $\kappa_A = (2-\phi)/(\lambda sh)$. The $A$-equation has exponent $-2$ on both sides, with coefficient condition $\kappa_A = c_1\kappa_X^{\lambda}\kappa_A^{\phi}$, so $\kappa_X = (\kappa_A^{1-\phi}/c_1)^{1/\lambda}$, positive for $\phi<2$. \hfill$\square$

\section{Calibration of Figure~\ref{fig:traj}}\label{app:calib}

The figure integrates $\dot Y/Y = g(t)$ with $g$ transitioning logistically (midpoint 2036, scale 2.2 years) from a human-constrained $2.5\%$/yr to a fabrication-limited $60\%$/yr---the conservative end of the range implied by \eqref{eq:growth} with $\bar s = 0.75$, $\tilde A = 0.9$/yr (capital-output ratio $\approx 1.1$ for fast-payback machine capital), $\delta = 0.1$, and no $g_A$ contribution---and then declining logistically (midpoint 2056) toward $15\%$/yr as energy capture becomes binding (Section~\ref{sec:limits}). AGI is dated 2030 for concreteness. Output is the integral of $g$; by construction the exercise illustrates the model's three regimes (demographic anchor, fabrication limit, energy limit) and carries no forecasting content. Under this deliberately conservative path, output in 2058 stands at roughly $1.6\times 10^{5}$ times its 2026 level, versus $2.2$ times under the human-constrained baseline.

\end{document}